\newcommand\definesymb[1]{%
\expandafter\newcommand\csname #1#1\endcsname{{\ensuremath{\mathbb{#1}}}}%
}
\newtheorem{theorem}{Theorem}
\newtheorem{defn}{Definition}[section]
\newtheorem{proposition}[defn]{Proposition}
\newtheorem{cor}[defn]{Corollary}
\title{Aperiodic Subshifts on Nilpotent and Polycyclic Groups}
\author{Emmanuel Jeandel\\
LORIA, UMR 7503 - Campus Scientifique, BP 239\\
54\,506 VANDOEUVRE-L\`ES-NANCY, FRANCE\\
\texttt{emmanuel.jeandel@loria.fr}}
\begin{document}

\maketitle

\begin{abstract}
We prove that every polycyclic group of nonlinear growth admits a strongly aperiodic SFT
and has an undecidable domino problem.
This answers a question of \cite{Carroll} and generalizes the result of \cite{BallierStein}.

\end{abstract}

Subshifts of finite type (SFT for short) in a group $G$ are colorings of the elements
of $G$ subject to local constraints.
They have been studied extensively for the free abelian group
$\mathbb{Z}$ \cite{LindMarcus} and on the free abelian group $\mathbb{Z}^2$, 
where they correspond to tilings of the discrete plane \cite{LindMulti}.

One of the main question about SFTs is the existence of strongly
aperiodic SFT, that is finding a finite set of local constraints so
that the only way to color the group $G$ is to do so aperiodically.
Strongly aperiodic SFT are known to exists for the group
$\mathbb{Z}^2$ \cite{BergerPhD} and the author \cite{SubGroups} has
provided examples on many groups of the form $\mathbb{Z} \times G$.

A related question is the decidability of the domino problem: Given an
SFT $X$ on a group $G$, decide if $X$ is empty.
This question is somewhat related to the existence of \emph{weakly}
aperiodic SFT, i.e. where no coloring is periodic along a subgroup of
finite index in $G$. The undecidability of the domino problem has been
established on $\mathbb{Z}^2$  \cite{BergerPhD}, on Baumslag-Solitar
groups \cite{AukA} and on any nilpotent group of nonlinear growth
\cite{BallierStein}.

In this article we introduce a few elementary techniques that show how
to exhibit strongly aperiodic SFT on any polycyclic group which is not
of linear growth (i.e. not virtually $\mathbb{Z}$), and we also prove
that any such group has an undecidable domino problem. The first
result answers a question of Carroll and Penland \cite{Carroll}, while
the second generalizes the result of Ballier and Stein from nilpotent
groups to polycyclic groups.

A full characterization of groups (or even finitely presented groups)
with strongly aperiodic SFT is still for now out of reach.
\section{Definitions}
We assume some familiarity with group theory and actions of groups.
See \cite{CicCoo} for a good reference on symbolic dynamics on groups.
All groups below are implicitely supposed to be finitely generated (f.g. for short).

Let $A$ be a finite set and $G$ a group.
We denote by $A^G$ the set of all functions from $G$ to $A$.
For $x \in A^G$, we write $x_g$ instead of $x(g)$ 
for the value of $x$ in $g$.

$G$ acts on $A^G$ by
\[
(g \cdot x)_h = x_{g^{-1}h}
\]

A \emph{pattern} is a partial function $P$ of $G$ to $A$ with finite
support. The support of $P$ will be denoted by $\mathrm{Supp}(P)$.

A \emph{subshift} of $A^G$ is a subset $X$ of $A^G$ which is
topologically closed (for the product topology on $A^G$) and invariant
under the action of $G$.

A subshift can also be defined in terms of forbidden patterns.
If $\cal P$ is a collection of patterns, the subshift defined by $P$ is 

\[
X_{\cal P} = \left\{ x\in A^G | \forall g \in G, \forall P \in {\cal P}
  \exists h \in \mathrm{Supp}(P), (g\cdot x)_h \not= P_h \right\}
\]
Every such set is a subshift, and every subshift can be defined this way.
If $X$ can be defined by a finite set $\cal P$, $X$ is said to be a
\emph{subshift of finite type}, or for short a SFT.

For a point $x \in X$, the stabilizer of $x$ is $Stab(x) = \{ g | g
  \cdot x =   x\}$

A subshift $X$ is \emph{strongly aperiodic} if 
it is nonempty and every point of $X$ has a finite stabilizer.
Some authors require that the stabilizer of each point is trivial
(rather than finite),
this will not make any difference in this article.

A f.g. group $G$ is said to have \emph{decidable domino problem} if there is an
algorithm that, given a description of a finite set of patterns $\cal P$, 
decides if $X_{\cal P}$ is empty.

In the remaining, we are interested in groups $G$ which admit 
strongly aperiodic SFTs or have an undecidable domino problem.

We now summarize previous theorems:
\begin{theorem}
\begin{itemize}
        \item $\mathbb{Z}$ does not admit strongly aperiodic SFT and has decidable word problem
        \item Free groups have decidable domino problem \cite{Kuske}
        \item The free abelian group $\mathbb{Z}^2$ \cite{BergerPhD}
		  has a strongly aperiodic SFT and an undecidable domino problem
		\item The free abelian group $\mathbb{Z}^3$ \cite{CulikKari}
		  has a strongly aperiodic SFT.
        \item f.g. nilpotent groups have an undecidable word problem unless they are virtually cyclic  \cite{BallierStein}
  \end{itemize}         
\end{theorem}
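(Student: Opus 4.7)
The statement collects classical results, so the plan is to address each item separately and cite the original sources for all but the first.

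For the item on $\mathbb{Z}$, I would proceed directly. After passing to a higher-block presentation, any SFT on $\mathbb{Z}$ is conjugate to the set of bi-infinite paths in a finite directed graph $\Gamma$. Non-emptiness is equivalent to the existence of a vertex lying on a cycle that is in turn reachable from a cycle, a decidable property of finite graphs; this shows the domino problem on $\mathbb{Z}$ is decidable. For the non-existence of strongly aperiodic SFTs, observe that if $X$ is non-empty then $\Gamma$ must contain a cycle, and traversing that cycle periodically yields a point $x \in X$ whose stabilizer equals $n\mathbb{Z}$ for some $n \geq 1$, in particular is infinite, so no SFT on $\mathbb{Z}$ can be strongly aperiodic.

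For the remaining items the proofs are considerably deeper and I would simply invoke them. Kuske's automata-theoretic analysis \cite{Kuske} gives the decidability of the domino problem on free groups by exploiting the tree structure of the Cayley graph. Berger's construction of an aperiodic Wang tile set together with his reduction of Turing machine halting to non-emptiness of an SFT \cite{BergerPhD} handles $\mathbb{Z}^2$. The Culik--Kari arithmetic construction \cite{CulikKari} supplies a strongly aperiodic SFT on $\mathbb{Z}^3$ (and in fact on $\mathbb{Z}^2$). Finally, the Ballier--Stein transfer \cite{BallierStein} embeds $\mathbb{Z}^2$-like grid patterns inside any finitely generated nilpotent group of nonlinear growth and pulls back undecidability of the domino problem from $\mathbb{Z}^2$.

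If one had to reprove anything from scratch, the main obstacle would unquestionably be the $\mathbb{Z}^2$ case, whose known proofs all require either a substantial explicit tile set (Berger, Robinson) or a sophisticated arithmetic construction (Kari--Culik). Since these are cited, the only remaining bookkeeping is to verify that the conventions in each cited source are compatible with the definitions above, in particular the fact (noted right after the definition of strong aperiodicity) that the "finite stabilizer" and "trivial stabilizer" formulations coincide for the present purposes.
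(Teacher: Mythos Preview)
Your proposal is correct and in fact goes slightly beyond what the paper does: the paper offers no proof at all for this theorem, treating it purely as a list of cited background results, whereas you additionally sketch the standard graph-theoretic argument for the $\mathbb{Z}$ item. Both approaches agree on the remaining items (cite the original sources), and your reading of the last bullet as concerning the \emph{domino} problem rather than the word problem is the intended one, since the cited Ballier--Stein result is about the domino problem and f.g.\ nilpotent groups always have decidable word problem.
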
  
\clearpage
We also give a few structural results:
\begin{theorem}
	\begin{itemize}
		\item Let $G,H$ be f. g. commensurable groups. Then $G$ admits a
          strongly aperiodic SFT (resp. has an undecidable domino problem)
          if only if $H$ does. \cite{Carroll}
		\item Let $G,H$ be \emph{finitely presented} groups that are
		  quasi-isometric. Then $G$ admits a
		  weakly aperiodic SFT (resp. has an undecidable domino problem)
		  if and only if $H$ does. \cite{Cohen2014}
        \item Finitely presented groups with a strongly aperiodic SFT
		  have decidable word problem \cite{SubGroups}.		  
	\end{itemize}		
\end{theorem}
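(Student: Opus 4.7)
The three bullets are largely independent results of quite different flavours, so my plan is to address each in turn, expecting the middle one (quasi-isometry) to be by far the hardest.

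For the commensurability item, I would use that any two commensurable groups share a common finite-index subgroup, reducing the problem to showing that each of the two properties passes up and down between a group $G$ and a finite-index subgroup $K$. Fixing a transversal $\{t_1,\dots,t_n\}$ for $K$ in $G$, I would encode an SFT $X \subseteq A^G$ as an SFT on $K$ over alphabet $A^n$, where the $i$-th coordinate at $k$ records $x_{kt_i}$. The local constraints of $X$ transfer because the relevant $G$-translates of a given pattern lie in a bounded set; stabilizers correspond under the encoding, so strong aperiodicity, weak aperiodicity, and emptiness are all preserved, and the domino problem on $G$ reduces to that on $K$. The converse direction uses the same trick with $A$ replaced by $A \times \{1,\dots,n\}$ to carry a coset label that the $G$-action must permute correctly and locally.

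For the quasi-isometry item, the natural attack is Cohen's: use finite presentation to encode a coarse quasi-isometry $G \to H$ inside decorations of an SFT on $G$, with consistency of the encoded map checked locally via Van Kampen diagrams (which are finitely describable because both groups are finitely presented). Pulling back a weakly aperiodic SFT on $H$ through this encoded map should then produce a weakly aperiodic SFT on $G$, and undecidability of the domino problem transfers along the same encoding. The main obstacle, and the reason this is genuinely harder than commensurability, is to ensure that the pull-back stays weakly aperiodic: one must prevent pulled-back configurations from acquiring finite-index periods that were not present on $H$, which requires a careful uniform control on the coarse geometry of the quasi-isometry. I would expect to spend most of the effort here.

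For the word-problem implication, the plan is to combine two semi-decisions. Given $G = \langle S \mid R \rangle$ and $w \in F(S)$, enumerating consequences of $R$ semi-decides $w =_G 1$ in the standard way. Assuming now a strongly aperiodic SFT $X$ on $G$ with finite forbidden set $\mathcal{F}$, I would semi-decide $w \ne_G 1$ by searching over increasing finite patterns $P$ for one consistent with $\mathcal{F}$ that forces $P_g \ne P_{wg}$ somewhere on its support; if $w \ne 1$ in $G$, such a witnessing $P$ must exist because strong aperiodicity rules out arbitrarily large valid patterns on which $w$ acts as the identity. Running the two semi-decisions in parallel then decides the word problem. The delicate step is to carry out the pattern search without circularly invoking the word problem; finite presentation lets us progressively enumerate equalities among the group elements appearing in a pattern's support, which is enough to make the search effective.
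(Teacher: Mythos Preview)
The paper does not prove this theorem at all: it is stated as a summary of prior results, each item carrying a citation (\cite{Carroll}, \cite{Cohen2014}, \cite{SubGroups}), and the text moves on immediately with ``The first of these results will be used almost everywhere in what follows.'' So there is no paper proof to compare your proposal against.

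That said, your sketches are broadly aligned with the arguments in the cited sources. For commensurability, the transversal/higher-block recoding you describe is exactly the mechanism in Carroll--Penland. For the quasi-isometry item you correctly identify Cohen's strategy of locally encoding a quasi-isometry using finite presentations and pulling back; you are also right that this is the substantial part and that controlling periods under pull-back is where the work lies. For the word-problem item your two-semi-decision plan is the standard one, but be careful on one point: the compactness step ``strong aperiodicity rules out arbitrarily large valid patterns on which $w$ acts as the identity'' is only immediate when stabilizers are \emph{trivial}, whereas the paper's definition allows finite stabilizers. The cited result in \cite{SubGroups} is in fact stated for trivial stabilizers; if you keep the paper's weaker definition you would need an extra argument (or to pass to a variant SFT with trivial stabilizers) before the semi-decision for $w \neq 1$ goes through. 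Your remark about using finite presentation to make the pattern search effective is the right idea: one enumerates finite partial Cayley-ball identifications certified by the relators and checks, for each, whether the forbidden patterns already rule out any $w$-invariant local picture.
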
	
The first of these results will be used almost everywhere in what
follows.

\section{Building aperiodic SFTs on $G$ from aperiodic SFTs on subgroups and quotients of $G$}

\subsection{Building blocks}
We start by a few easy propositions that explain how to build SFT in
$G$ from a SFT in a subgroup $H$ or a SFT in $G/H$, provided in the second case
that $H$ is normal and finitely generated.

\begin{proposition}
	\label{prop:l1}
	Let $G$ be a f.g. group and $H$ a f.g. normal subgroup of $G$.
	Let $\phi: G \rightarrow G/H$ the corresponding morphism. 
    Let $X$ be a SFT on $G/H$ over the alphabet $A$.
    Let \[ Y = \{ y \in A^G | \exists x \in X \forall g \in G , y_g =
	  x_{\phi(g)} \} \]
	Then $Y$ is a SFT.
	
	Furthermore, for every $y \in Y$, there exists $x \in X$ s.t. $Stab(y) = H Stab(x)$.
\end{proposition}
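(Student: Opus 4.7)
The plan is to exhibit a finite set of forbidden patterns for $Y$ (which also shows $Y$ is a subshift) and then to compute $Stab(y)$ by passing through the canonical $x \in X$ encoded by $y$.

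First I would observe that any $y \in Y$ is constant on right cosets of $H$: indeed, if $h \in H$ then $\phi(gh) = \phi(g)$, so $y_g = y_{gh}$. Conversely, if $y \in A^G$ is constant on right cosets of $H$, then $x_{\phi(g)} := y_g$ is a well-defined element of $A^{G/H}$ and $y \in Y$ iff this $x$ lies in $X$. So membership in $Y$ amounts to two conditions: (i) $y$ is constant on right $H$-cosets, and (ii) the induced function $x$ on $G/H$ lies in $X$. This decomposition also makes the $G$-invariance of $Y$ transparent, since the shift $g \cdot y$ descends to $\phi(g) \cdot x$, which remains in $X$.

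Next I would encode each condition by finitely many forbidden patterns. For (i), since $H$ is finitely generated, pick generators $h_1, \dots, h_k$ of $H$; for each $i$ and each pair $a \neq b \in A$, forbid the pattern on support $\{e, h_i\}$ with values $a$ and $b$. Applying the $G$-action shows that avoiding these patterns is equivalent to $y_g = y_{gh_i}$ for every $g$ and $i$, which by Tietze-style expansion forces $y_g = y_{gh}$ for all $h \in H$. For (ii), pick any set-theoretic section $s : G/H \to G$ of $\phi$. Given a finite family $\mathcal{P}$ of forbidden patterns defining $X$, lift each $P \in \mathcal{P}$ to $\tilde{P}$ on $G$ with $\mathrm{Supp}(\tilde P) = s(\mathrm{Supp}(P))$ and $\tilde P_{s(u)} = P_u$. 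Once $y$ satisfies (i), the induced $x$ satisfies $P$ somewhere iff $y$ matches $\tilde P$ at the translate by any lift, so forbidding the $G$-orbit (in the SFT sense) of each $\tilde P$ is the right finite constraint. This gives a finite defining family for $Y$, proving it is an SFT.

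Finally I would compute the stabilizer. Given $y \in Y$, let $x$ be the unique element of $X$ with $y_g = x_{\phi(g)}$. A direct computation shows $(g \cdot y)_h = x_{\phi(g)^{-1}\phi(h)} = (\phi(g) \cdot x)_{\phi(h)}$, so $g \cdot y = y$ iff $\phi(g) \cdot x = x$, i.e.\ iff $\phi(g) \in Stab(x)$. Therefore $Stab(y) = \phi^{-1}(Stab(x))$, which is exactly $H \cdot Stab(x)$ under the natural reading (the preimage of $Stab(x) \subseteq G/H$). The hardest ingredient, and the only place where a hypothesis is truly used, is finite generation of $H$, which is what makes condition (i) expressible by finitely many forbidden patterns; everything else is bookkeeping with the section $s$ and the definition of the shift action.
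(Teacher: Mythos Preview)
Your argument is correct and matches the paper's own approach exactly: the paper only sketches the proof (deferring to \cite{Carroll}), saying one should lift the forbidden patterns of $X$ to $G$ and add finitely many patterns forcing $y_g = y_{g'}$ whenever $gg'^{-1}\in H$, using finite generation of $H$---precisely your conditions (ii) and (i). Your stabilizer computation $Stab(y)=\phi^{-1}(Stab(x))$ is the intended reading of the paper's ``$H\,Stab(x)$''; the only cosmetic slip is calling the cosets $gH$ ``right'' cosets, which is harmless here since $H$ is normal.
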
	
See e.g. \cite{Carroll} for a proof. The idea is to lift the forbidden
patterns defined on $G/H$ to forbidden patterns defined on $G$, and to
use additional forbidden patterns to force $y_g = y_{g'}$ whenever
$gg'^{-1} \in H$. This is possible as $H$ is supposed to be finitely
generated.

\begin{proposition}
		\label{prop:l2}
	Let $G$ be a f.g. group and $H$ a f.g. subgroup of $G$.
	
	Let $X$ be a SFT on $H$ over the alphabet $A$ defined
	by the set of forbidden	patterns $\cal P$.	
	Using the same forbidden patterns, we obtain a SFT on $G$
        that we call $Y$.
	
	Then $Y$ is nonempty iff $X$ is nonempty.
	More precisely, let $K$ be a left transversal of $H$ in $G$. 
	Then $y \in Y$ iff there exists points $(x^k)_{k \in K}$ in $X$  s.t. $y_{kh} = x^k_h$.

    In particular, for all $y \in Y$, there exists $x \in X$
	s.t. $Stab(y) \cap H \subseteq Stab(x)$
\end{proposition}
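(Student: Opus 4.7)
The plan is to go via the bijection between $A^G$ and $(A^H)^K$ induced by the partition $G = \bigsqcup_{k \in K} kH$: every $g \in G$ writes uniquely as $g = kh$ with $k \in K, h \in H$, so a configuration $y \in A^G$ is the same data as a family $(x^k)_{k \in K}$ of configurations on $H$, related by $y_{kh} = x^k_h$. The whole proposition reduces to showing that, under this bijection, $y \in Y$ corresponds exactly to each $x^k$ lying in $X$. I will assume without loss of generality that $e \in K$, so one of the components is literally the restriction $x^e = y|_H$.

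The heart of the argument is the following translation of the "forbidden pattern" condition. Each $P \in \mathcal{P}$ has $\mathrm{Supp}(P) \subseteq H$. The point $y$ lies in $Y$ means: for every $g \in G$ and every $P \in \mathcal{P}$, there exists $h \in \mathrm{Supp}(P)$ with $y_{g^{-1}h} \neq P_h$. Now write $g^{-1} = k h_0$ with $k \in K$ and $h_0 \in H$; since $h \in H$, we have $g^{-1}h = k(h_0 h)$, hence $y_{g^{-1}h} = x^k_{h_0 h}$. So the $Y$-condition at $g$ becomes: there exists $h \in \mathrm{Supp}(P)$ with $x^k_{h_0 h} \neq P_h$, which is exactly the condition that $h_0^{-1} \cdot x^k$ avoids $P$. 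As $g$ runs through $G$, the pair $(k, h_0)$ runs through $K \times H$, so the conjunction over $g$ splits into a conjunction over $k$ of "$x^k$ avoids every translate of every $P \in \mathcal{P}$", i.e.\ $x^k \in X$. Both directions of the equivalence fall out of this single computation.

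From this correspondence the remaining claims are immediate. If $X = \emptyset$ then there is no family $(x^k)$ and so $Y = \emptyset$; conversely, if $X \neq \emptyset$, pick any $x \in X$ and set $x^k := x$ for every $k \in K$ to produce a point of $Y$. For the stabilizer statement, take $x := x^e = y|_H$; for any $g \in \mathrm{Stab}(y) \cap H$ and any $h \in H$ we have $g^{-1}h \in H$, so
\[
(g \cdot x)_h \;=\; x_{g^{-1}h} \;=\; y_{g^{-1}h} \;=\; (g \cdot y)_h \;=\; y_h \;=\; x_h,
\]
giving $g \in \mathrm{Stab}(x)$.

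The argument is essentially bookkeeping; the only step with any content is the change of variables $g^{-1} = kh_0$ that turns the global forbidden-pattern condition on $Y$ into a per-coset condition on each $x^k$. The main thing to be careful about is to check that $h$ ranges inside $H$ (so that $g^{-1}h$ stays in the coset $kH$), which is exactly where the hypothesis $\mathrm{Supp}(P) \subseteq H$ is used.
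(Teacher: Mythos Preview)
Your proof is correct and follows essentially the same approach as the paper: both use the coset decomposition $g^{-1}=kh_0$ to translate the forbidden-pattern condition on $y$ into the per-coset conditions $x^k\in X$. Your presentation is slightly cleaner in that you handle both directions at once via the bijection $A^G\cong (A^H)^K$, and you also spell out the stabilizer claim (which the paper leaves implicit).
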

\begin{proof}
Write $G = KH$ for some transversal $K$. Each element of $g \in G$
can therefore be written in a unique way in the form $g = k h$.

Let $y \in Y$ and let $k \in K$. Define $(x^k)_h = y_{kh}$.
Let $h' \in H$ and $P$ a forbidden pattern of $\cal P$.
We will prove there exists $p$ s.t. $(h' \cdot x^k )_p \not= P_p$ which
proves that $x^k  \in X$.

By definition of $Y$, there exists $p$ in the support of $\cal P$ s.t.
$(h' k^{-1} \cdot y)_p \not= P_p$.
Therefore $y_{kh'^{-1} p} \not= P_p$. But $(h' \cdot x^k)_p =
(x^k)_{h'^{-1} p} = y_{kh'^{-1}p}$, the result is proven.

Conversely, take some points $(x^k)_{k \in K}$ in $X$ and define
$y_{kh} = x_h^k$.
Let $g \in G$. Write $g^{-1} = kh^{-1}$ for some $k$ and $h$.
Let $P\in \cal P$.
As $x^k \in X$, there exists $p \in H$ s.t. $(h \cdot x^k)_p \not= P_p$.
But then 
$(g \cdot y)_p = y_{g^{-1}p} = y_{kh^{-1}p} = x^k_{h^{-1}p} = (h \cdot x^k )_p \neq P_p$.
Therefore $y \in Y$.	
\end{proof}	

\subsection{Applications}
We now start with the first proposition that gives a natural way to
prove that a group $G$ has a strongly aperiodic SFT.

\begin{proposition}
  \label{main:prop}
	Let $G$ be a f.g. group. Suppose that $G$ contains two f.g. groups $H_1, H_2$ s.t.
	\begin{itemize}
	    \item $H_1 \subseteq H_2$.
		\item $H_1$ is normal and $G/H_1$ admits a strongly aperiodic SFT $X_1$
		\item $H_2$ admits a strongly aperiodic SFT $X_2$
	\end{itemize}		
   Then $G$ admits a strongly aperiodic SFT $Y$.
   Furthermore, if each point in $X_1$ and $X_2$ have trivial
   stabilizers, then every point of $Y$ has a trivial stabilizer.
\end{proposition}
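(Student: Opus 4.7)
The plan is to combine the two hypothesised SFTs using the building blocks of the previous subsection. Let $Y_1 \subseteq A_1^G$ be the SFT obtained by lifting $X_1$ via Proposition \ref{prop:l1}, and let $Y_2 \subseteq A_2^G$ be the SFT obtained by extending $X_2$ via Proposition \ref{prop:l2}. I would then take $Y \subseteq (A_1 \times A_2)^G$ to be the product SFT, defined by the union of the two sets of forbidden patterns read on the appropriate coordinate. A configuration $y \in Y$ is naturally a pair $(y^{(1)},y^{(2)})$ with $y^{(i)} \in Y_i$, and the $G$-action is diagonal, so $Stab(y) = Stab(y^{(1)}) \cap Stab(y^{(2)})$. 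Nonemptiness of $Y$ follows from the nonemptiness of $Y_1$ and $Y_2$, which themselves come from the nonemptiness of $X_1$ and $X_2$.

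The main work is showing that $Stab(y)$ is finite for every $y \in Y$. By Proposition \ref{prop:l1} applied to $y^{(1)}$ there exists $x_1 \in X_1$ with $Stab(y^{(1)}) = H_1 \cdot Stab(x_1)$; since $X_1$ is strongly aperiodic, $Stab(x_1)$ is finite and hence $Stab(y^{(1)})$ is a finite union $\bigsqcup_{i=1}^n H_1 s_i$ of right cosets of $H_1$. By Proposition \ref{prop:l2} applied to $y^{(2)}$ there exists $x_2 \in X_2$ with $Stab(y^{(2)}) \cap H_2 \subseteq Stab(x_2)$, and the right hand side is finite.

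The key computation is then to bound $|Stab(y^{(1)}) \cap Stab(y^{(2)})|$. Writing this intersection as $\bigsqcup_{i=1}^n\bigl(H_1 s_i \cap Stab(y^{(2)})\bigr)$, a short coset argument shows that each slice has cardinality at most $|H_1 \cap Stab(y^{(2)})|$: if $h s_i$ and $h' s_i$ both lie in $Stab(y^{(2)})$, then $h (h')^{-1}$ lies in $H_1 \cap Stab(y^{(2)})$. This is where the hypothesis $H_1 \subseteq H_2$ becomes essential, since it upgrades the inclusion $H_1 \cap Stab(y^{(2)}) \subseteq H_2 \cap Stab(y^{(2)}) \subseteq Stab(x_2)$ to a finiteness statement. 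I expect this to be the only genuinely delicate point; without the nesting $H_1 \subseteq H_2$, Proposition \ref{prop:l2} would give no usable handle on the slices of $Stab(y^{(1)})$.

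Finally, the furthermore clause follows by tracking equalities in the same diagram. If $Stab(x_1) = \{e\}$ then $Stab(y^{(1)}) = H_1$, so $Stab(y) = H_1 \cap Stab(y^{(2)})$; and if $Stab(x_2) = \{e\}$ then $H_1 \cap Stab(y^{(2)}) \subseteq H_2 \cap Stab(y^{(2)}) \subseteq Stab(x_2) = \{e\}$, forcing $Stab(y) = \{e\}$ for every $y \in Y$.
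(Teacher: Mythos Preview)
Your proof is correct and follows essentially the same approach as the paper: build $Y_1$ and $Y_2$ from Propositions~\ref{prop:l1} and~\ref{prop:l2}, take $Y = Y_1 \times Y_2$, and bound $|Stab(y)|$ by slicing $Stab(y^{(1)})$ into finitely many $H_1$-cosets and observing via the coset argument that each slice meets $Stab(y^{(2)})$ in at most $|H_1 \cap Stab(y^{(2)})| \leq |Stab(x_2)|$ points. The paper's version is slightly terser (it works directly with $K = Stab(y)$ rather than the larger $Stab(y^{(1)}) \cap Stab(y^{(2)})$), but the mechanism and the role of the hypothesis $H_1 \subseteq H_2$ are identical.
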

It is a good but nontrivial exercise to show, under the conditions of the
proposition, that if $H_1$ and $H_2$ have a decidable word problem,
then $G$ does.
\begin{proof}
	Use the two previous propositions to build $Y_1$ and $Y_2$ and consider $Y = Y_1 \times Y_2$.
	
	Let $y = (y_1, y_2) \in Y$ and consider its stabilizer $K = Stab(y)$.
	By definition of $Y_1$, $K \subseteq H_1 F$ for some finite set $F$.
	By definition of $Y_2$, $K \cap H_2$ is finite. In particular $K \cap H_1$ is finite.
	
	This implies that $K$ is finite. Indeed, for each $f \in F$, $K
	\cap fH_1 = K \cap H_1f$ is finite: If $x,y \in K \cap fH_1$, then $xy^{-1} \in K \cap H_1$.
			
    Furthermore if $F$ is trivial and $K \cap H_2$ is trivial, then	$K$ is trivial.
\end{proof}

\begin{cor}
	\label{cor:l1}	
Let $G$ be a f.g. group.
If $G$ contains a f.g. normal subgroup $H_1$ s.t. $G/H_1$ admits a
strongly aperiodic SFT and $H_1$ admits a strongly aperiodic SFT, then
$G$ admits a strongty aperiodic SFT.

In particular, if $G_1$ and $G_2$ are f.g. groups that admit strongly
aperiodic SFTs, then $G_1 \times G_2$ does.
\end{cor}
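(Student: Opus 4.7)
The plan is to deduce both statements as essentially immediate applications of Proposition~\ref{main:prop}, which is already powerful enough that all the real work has been done. For the first claim, I would simply take $H_2 = H_1$. The inclusion $H_1 \subseteq H_2$ is then trivial, $H_1$ is f.g.\ and normal with $G/H_1$ admitting a strongly aperiodic SFT by hypothesis, and $H_2 = H_1$ admits a strongly aperiodic SFT, again by hypothesis. All three bullets of Proposition~\ref{main:prop} are satisfied, so $G$ admits a strongly aperiodic SFT.

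For the direct product statement, I would set $G = G_1 \times G_2$ and take $H_1 = G_1 \times \{e_{G_2}\}$. This subgroup is naturally isomorphic to $G_1$, hence finitely generated and admits a strongly aperiodic SFT; it is normal in the direct product (being the kernel of the projection onto $G_2$); and the quotient $G/H_1$ is isomorphic to $G_2$, which also admits a strongly aperiodic SFT by hypothesis. Note that $G_1 \times G_2$ is finitely generated as the union of the generating sets of the two factors generates it. With all hypotheses of the first part of the corollary verified, the conclusion follows.

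There is no real obstacle here; the corollary is essentially a repackaging of Proposition~\ref{main:prop} in the specialised case $H_1 = H_2$, followed by the observation that a direct product always supplies such a normal subgroup. The only minor point worth double-checking is that the ``trivial stabilizer'' refinement in Proposition~\ref{main:prop} is not needed for the corollary as stated, since we only assert finiteness of stabilizers; if one wished to also record a trivial-stabilizer version of the corollary, it would follow from the corresponding clause of Proposition~\ref{main:prop} without any additional argument.
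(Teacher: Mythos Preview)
Your proof is correct and is exactly the intended derivation: the paper gives no separate proof for the corollary, as it is meant to follow immediately from Proposition~\ref{main:prop} by taking $H_2 = H_1$, and then specializing to $G = G_1 \times G_2$ with $H_1 = G_1 \times \{e\}$ just as you do.
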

Therefore groups with strongly aperiodic SFT are closed under direct
product. This seems quite natural but does not seem to have been known.

The second proposition of the previous section explain
how an aperiodic SFT on $H \subseteq G$ give rise to a SFT on $G$.
Usually this SFT will not be aperiodic. There are however a few
conditions in which it will.
Here is an obvious one, for which we do not have any interesting application
\begin{proposition}
Let $H$ be a subgroup of $G$ that admits an aperiodic SFT $X$.
Suppose that every element of $G$ is conjugate to some element of $H$
($H$ is said to be \emph{conjugately-dense}).
Then $G$ admits an aperiodic SFT. Actually, $X$, seen as an aperiodic
SFT on $G$, is aperiodic.
\end{proposition}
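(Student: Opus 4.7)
The plan is to apply Proposition~\ref{prop:l2} to lift $X$ from $H$ to an SFT $Y \subseteq A^G$ using the same forbidden patterns, and then to exploit conjugate-density to reduce a hypothetical $G$-period of a point of $Y$ to an $H$-period of a point of $X$, where aperiodicity of $X$ already forbids it.

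Concretely, I would fix $y \in Y$ and pick an arbitrary $g \in \mathrm{Stab}(y)$, aiming to show that $g$ is trivial (in the trivial-stabiliser convention adopted in the paper). By conjugate-density, choose $k \in G$ and $h \in H$ with $g = khk^{-1}$. The identity $khk^{-1} \cdot y = y$ rearranges to $h \cdot (k^{-1} \cdot y) = k^{-1} \cdot y$, so $h$ stabilises the point $y' := k^{-1} \cdot y$. Since $Y$ is a subshift of $A^G$ it is $G$-invariant, so $y' \in Y$.

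Now I would invoke the last clause of Proposition~\ref{prop:l2}: there exists $x \in X$ with $\mathrm{Stab}(y') \cap H \subseteq \mathrm{Stab}(x)$. Because $h \in H \cap \mathrm{Stab}(y')$, we obtain $h \in \mathrm{Stab}(x)$, and the aperiodicity of $X$ on $H$ then forces $h = 1$, whence $g = 1$. Thus every point of $Y$ has trivial stabiliser, so $Y$ is an aperiodic SFT on $G$, as required.

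This is essentially a one-line corollary of Proposition~\ref{prop:l2}, so I do not anticipate a genuine obstacle; the only point that needs care is the shift step. Proposition~\ref{prop:l2} controls only the intersection of the stabiliser with $H$ itself, not with an arbitrary conjugate of $H$, so one must first translate the hypothetical $G$-period of $y$ back into $H$ by shifting $y$ by $k^{-1}$ — which lands in $Y$ precisely because $Y$ is $G$-invariant and not merely $H$-invariant. Conjugate-density is used exactly once, to produce the decomposition $g = khk^{-1}$.
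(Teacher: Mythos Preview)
The paper gives no proof here: the proposition is introduced as ``an obvious one, for which we do not have any interesting application'' and is left without argument. Your proof is correct and is exactly the one-line deduction from Proposition~\ref{prop:l2} that the author presumably had in mind --- shift by $k^{-1}$ to push the conjugate period into $H$, then invoke $\mathrm{Stab}(y')\cap H \subseteq \mathrm{Stab}(x)$.

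One small remark on conventions: you write ``the trivial-stabiliser convention adopted in the paper,'' but the paper's primary definition of \emph{strongly aperiodic} actually asks for \emph{finite} stabilisers, noting only that the two conventions ``will not make any difference in this article.'' Under the finite-stabiliser reading your argument shows that each $g\in\mathrm{Stab}(y)$ is conjugate to an element of finite order, hence is itself torsion; this gives that $\mathrm{Stab}(y)$ is a torsion group, which is a priori weaker than finite. Under the trivial-stabiliser reading you explicitly invoke, the argument is complete as written.
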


\section{Aperiodic SFTs on polycyclic and nilpotent groups}
Using the previous proposition, we will be able to prove that
nontrivial nilpotent and polycyclic groups have strongly aperiodic SFT.
We start with nilpotent groups.

First, a few definitions, see \cite[Chapter 1]{Segal} for details.

Let $G$ be a  group. The \emph{Hirsch number} $h(G)$ of $G$ is the number of infinite
factors in a series with cyclic or finite factors.
It is defined for nilpotent groups, and more generally for polycyclic groups.
A nilpotent (or a polycyclic) group of Hirsch number $0$ is a finite group.
A nilpotent (or a polycyclic) group of Hirsch number $1$ is a finite-by-cyclic-by-finite group, hence virtually $\mathbb{Z}$.

The only thing we will need about nilpotent  groups is that (a) they
have a non trivial center (b)  quotients and
subgroups of nilpotent groups are nilpotent and finitely generated (c)
$h(G) = h(G/H)+h(H)$ (which make sense due to the previous point) (d)
every nilpotent group contains a torsion-free nilpotent group of
finite index.

Note that the rank and the Hirsch number coincide for free abelian
groups, i.e. $\mathbb{Z}^n$ is of Hirsch number $n$.

\begin{theorem}
	Let $G$ be a finitely generated group of nonlinear polynomial growth.
	
	Then $G$ admits a strongly aperiodic SFT.
\end{theorem}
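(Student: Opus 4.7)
My plan is to reduce to the torsion-free nilpotent case and then induct on the Hirsch number, using Proposition \ref{main:prop} as the inductive engine. By Gromov's theorem, a finitely generated group of polynomial growth is virtually nilpotent; nonlinear growth excludes being virtually cyclic, so together with property (d) this yields a finitely generated torsion-free nilpotent subgroup $N$ of finite index with $h(N) \geq 2$. Since commensurable groups admit strongly aperiodic SFTs together (Carroll--Penland), it suffices to produce one on $N$.

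For the base case $h(N)=2$ I would show $N$ is virtually $\mathbb{Z}^2$ and invoke Berger. Let $Z = Z(N)$, which by (a) and torsion-freeness satisfies $h(Z) \geq 1$. If $h(Z) = 2$, then by (c) the quotient $N/Z$ is finite and $Z$, being finitely generated torsion-free abelian of rank $2$, is isomorphic to $\mathbb{Z}^2$. If $h(Z) = 1$, then $N/Z$ is virtually $\mathbb{Z}$; lift an element of infinite order to some $g \in N$, so that $\langle Z, g\rangle$ is abelian (because $Z$ is central), torsion-free of rank $2$, hence $\cong \mathbb{Z}^2$, and has finite index in $N$ because its image in $N/Z$ does. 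For the inductive step with $h(N) \geq 3$, if $N \cong \mathbb{Z}^h$ is abelian I apply Proposition \ref{main:prop} with $H_1 \cong \mathbb{Z}^{h-2}$ a direct summand (so $N/H_1 \cong \mathbb{Z}^2$ has an SFT by Berger) and $H_2 \cong \mathbb{Z}^{h-1}$ a direct summand containing $H_1$ (with an SFT by the inductive hypothesis).

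If $N$ is non-abelian, $Z$ is a proper central subgroup. When $h(N/Z) = 1$, the base-case construction exhibits a finite-index abelian subgroup of $N$, so $N$ is virtually $\mathbb{Z}^{h(N)}$ and the abelian case plus commensurability finish. When $h(N/Z) \geq 2$, the inductive hypothesis supplies a strongly aperiodic SFT on $N/Z$, and I set $H_1 = Z$; for $H_2 \supseteq Z$ with an SFT I take $H_2 = Z \cong \mathbb{Z}^{h(Z)}$ if $h(Z) \geq 2$, and $H_2 = \langle Z, g\rangle \cong \mathbb{Z}^2$ for a lift $g$ of an infinite-order element of $N/Z$ if $h(Z) = 1$. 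Proposition \ref{main:prop} then yields an SFT on $N$. The delicate step I expect to watch is ensuring that $\langle Z, g\rangle$ really is $\cong \mathbb{Z}^2$ rather than $\mathbb{Z}$, which relies on choosing $g$ whose image in $N/Z$ has infinite order; this is available precisely because in each use of this trick $N/Z$ has Hirsch $\geq 1$. Otherwise the proof is an assembly of Proposition \ref{main:prop}, the additivity of the Hirsch number, Berger's theorem, and the commensurability principle.
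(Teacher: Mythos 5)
Your proof is correct and follows essentially the same route as the paper: reduce to a torsion-free f.g.\ nilpotent group, induct on the Hirsch number with virtually $\ZZ^2$ as the base case, and feed a central $H_1$ and an $H_2\supseteq H_1$ of Hirsch number $\geq 2$ into Proposition~\ref{main:prop}. The only difference is that you take $H_1$ to be the full center, which forces the case analysis on $h(Z)$ and $h(N/Z)$; the paper instead takes $H_1\cong\ZZ$ inside the center, so that $h(G/H_1)=h(G)-1\geq 2$ automatically and $H_2=\langle H_1,y\rangle$ always works, collapsing all your cases into one (you also silently drop the Hirsch number $0$ case, which the paper treats as trivially true).
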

\begin{proof}
Such groups are exactly the f.g. virtually nilpotent groups of Hirsch
number different from $1$.
The proof will use repeatedly that if $G$ is a f.g. group, and $H$ a
subgroup of finite index, then $G$ has a strongly aperiodic SFT iff
$H$ does, see \cite{Carroll} for details.
In particular it is sufficient to prove the theorem for
f.g.  nilpotent groups to obtain the results for f.g. virtually nilpotent groups.

The result is clearly true for Hirsch number $0$. We now prove
the theorem by induction on the Hirsch number $h(G) \geq 2$.

We start with Hirsch number 2. A nilpotent group of Hirsch number $2$ is virtually $\mathbb{Z}^2$, hence has a strongly aperiodic SFT.

Now let $G$ be a nilpotent group of Hirsch number $n > 2$.
Let $G_1$ be a torsion-free nilpotent subgroup of finite index of $G$.
It is enough to prove the theorem for $G_1$ to obtain the theorem for
$G$, therefore we will suppose that $G_1 = G$.

$G$ has a non trivial center and torsion-free, hence contains a copy
of $H_1 = \mathbb{Z}$ in its center.
$h(G/H_1) = h(G) - h(H_1) \geq 2$ therefore by induction $G/H_1$ has a strongly aperiodic SFT.

Furthermore, $G/H_1$ is a nilpotent group, hence contains a
torsion-free nilpotent group of finite index, hence contains a
torsion-free element $x$. Let $y$ be a representative of $x$ in $G$.
Let $H_2$ be the group generated by $H_1$ and $y$.
AS $H_1$ is normal in $H_2$, $h(H_2) = h(H_2/H_1) + h(H_1) = 2$
Therefore $H_2$ is by construction of Hirsch length $2$, hence has a strongly aperiodic SFT.

We have produced our required groups $H_1$ and $H_2$ and we can apply
Proposition~\ref{main:prop} to finish the induction.
\end{proof}	

We now proceed to the proof for a polycyclic group. The proof is
roughly similar, except there is an additional case to work out, which
is the case of groups $G$ that admit an exact sequence $1 \rightarrow
\mathbb{Z}^n \rightarrow G \rightarrow \mathbb{Z} \rightarrow 1$.
It is not obvious how to obtain strongly aperiodic SFTs for these
groups. Fortunately:
\begin{theorem}[\cite{BarSab}]
  Let $G$ be a group that admits an exact sequence
  \[1 \rightarrow \mathbb{Z}^n \rightarrow G \rightarrow \mathbb{Z}
  \rightarrow 1\] with $n \geq 2$.

  Then $G$ admits a strongly aperiodic SFT.
\end{theorem}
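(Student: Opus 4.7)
The plan is to reduce to the algebraic structure of the extension and then split into an easy and a hard case. Since $H^2(\mathbb{Z}; \mathbb{Z}^n) = 0$ (because $\mathbb{Z}$ has cohomological dimension one), the extension is automatically split, so $G \cong \mathbb{Z}^n \rtimes_\phi \mathbb{Z}$ for some $\phi \in GL_n(\mathbb{Z})$. The argument then branches on the spectral behaviour of $\phi$.

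If $\phi$ has finite order $k$, the subgroup $\mathbb{Z}^n \rtimes_\phi k\mathbb{Z}$ is of finite index in $G$ and isomorphic to $\mathbb{Z}^{n+1}$. Since $n+1 \geq 3$, the Culik--Kari construction supplies a strongly aperiodic SFT on $\mathbb{Z}^{n+1}$, and the Carroll--Penland commensurability result transfers it to $G$. Still in the easy regime, if $\phi$ has infinite order but all of its eigenvalues are roots of unity, then some power of $\phi$ is unipotent, so $G$ is virtually nilpotent of Hirsch length $n+1 \geq 3$, and the preceding theorem of this paper already supplies a strongly aperiodic SFT.

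The genuine case is when $\phi$ has an eigenvalue that is not a root of unity, so that $G$ is neither virtually abelian nor virtually nilpotent and has exponential growth. Here Proposition \ref{main:prop} does not apply directly, because $G/\mathbb{Z}^n = \mathbb{Z}$ has no strongly aperiodic SFT. My plan is to combine two ingredients. First, I would lift a strongly aperiodic SFT on $\mathbb{Z}^n$ (which exists because $n \geq 2$) via Proposition \ref{prop:l2}; that proposition guarantees that any configuration $y$ of the lifted SFT satisfies $\mathrm{Stab}(y) \cap \mathbb{Z}^n$ finite, so horizontal periodicity inside the fibre is already ruled out. Second, I would superimpose a hierarchical, Kari--Culik-style SFT whose self-similarity is intertwined with the matrix $\phi$: an eigenvalue $\lambda$ of $\phi$ with $|\lambda| > 1$ provides the expanding/contracting direction needed to encode an affine dynamical system in the vertical direction, thereby preventing $\mathbb{Z}$-periodicity on top of the fibrewise aperiodicity. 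A product construction in the spirit of Proposition \ref{main:prop} would then combine the two.

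The main obstacle is precisely this second ingredient: writing down a finite-rule SFT on $G$ that simulates the dynamics of $\phi$ well enough to forbid every periodic configuration along the $\mathbb{Z}$-axis, while respecting the non-commutativity between fibre and base directions. I would expect this to require genuinely new ideas beyond the abstract machinery of Section~2, most likely by adapting known substitutive SFTs for Anosov automorphisms of the torus to the non-abelian setting of $G$. This delicate technical core is presumably what is supplied by the cited work of Barbieri and Sablik, and is what makes it reasonable to quote the theorem rather than reprove it here.
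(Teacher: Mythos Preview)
The paper does not prove this theorem at all: it is quoted verbatim from Barbieri--Sablik and immediately followed by the remark that the result is ``highly nontrivial''. There is therefore no proof in the paper to compare your attempt against.

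Your proposal is not a self-contained proof either, and you say so yourself. Your preliminary reductions are correct: the extension does split, and when $\phi$ has all eigenvalues roots of unity the group is virtually nilpotent of Hirsch length $n+1\geq 3$, so the earlier theorem of the paper applies. But in the remaining case --- $\phi\in GL_n(\mathbb{Z})$ with an eigenvalue off the unit circle --- your sketch (lift a $\mathbb{Z}^n$-aperiodic SFT, then superimpose a Kari--Culik-type layer tied to the spectrum of $\phi$) is only a heuristic, and you explicitly concede that the ``delicate technical core is presumably what is supplied by the cited work''. That is exactly the position the paper takes.

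So in substance you and the paper agree: the theorem is a black box imported from \cite{BarSab}. What your write-up adds is a clean identification of which sub-case actually needs the black box (the non-virtually-nilpotent one), which is useful commentary but not an alternative proof. Note incidentally that the actual mechanism in \cite{BarSab} is a simulation theorem for semidirect products rather than a direct Kari--Culik-style construction, so your speculative second ingredient points in a somewhat different direction from what that reference does.
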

This result is highly nontrivial and $\mathbb{Z}$ can actually be
replaced by any f.g. group with decidable word problem. We prefer
citing the result in this form as it is likely it admits a simpler
proof in this particular case.

Polycyclic groups are groups which admit a series with cyclic factors.
All relevant properties of nilpotent groups stated above are still true when
nilpotent is replaced by polycyclic: polycyclic groups are always
finitely generated (actually finitely presented), quotients and
subgroups of polycyclic groups are polycyclic, 
every polycyclic group contains a torsion-free polycyclic group of
finite index. Moreover polycyclic
groups admit a nontrivial normal free abelian subgroup.

\begin{theorem}
	Let $G$ be a virtually polycyclic group of Hirsch number different from $1$.
	
	Then $G$ admits a strongly aperiodic SFT.
\end{theorem}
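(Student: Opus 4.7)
The plan mirrors the nilpotent proof: a nontrivial normal free abelian subgroup plays the role of the centre, and the Bartholdi--Sabbagh theorem absorbs the one case where this substitution fails. By the commensurability result of \cite{Carroll}, I first reduce to the case where $G$ is polycyclic, and induct on the Hirsch number $h(G)$ (never equal to $1$). At each step, by passing to a finite-index subgroup I may further assume $G$ is torsion-free polycyclic. The base cases are $h(G)=0$ (finite, nothing to prove) and $h(G)=2$: a torsion-free polycyclic group of Hirsch number $2$ sits in an extension $1\to\mathbb{Z}\to G\to\mathbb{Z}\to 1$ and is easily seen to be virtually $\mathbb{Z}^2$, so Berger's theorem applies.

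For the induction step with $h(G)=n\geq 3$, I pick a nontrivial normal free abelian subgroup $A\triangleleft G$ of rank $k=h(A)\geq 1$ and case-split on $h(G/A)$. If $h(G/A)=0$, then $A$ has finite index in $G$, so $G$ is virtually $\mathbb{Z}^n$ with $n\geq 3$. If $h(G/A)\geq 2$, the induction hypothesis applies to $G/A$ (polycyclic of Hirsch number different from $1$) and gives a strongly aperiodic SFT there; I then invoke Proposition~\ref{main:prop} with $H_1=A$, taking $H_2=A\cong\mathbb{Z}^k$ when $k\geq 2$, and otherwise (when $A\cong\mathbb{Z}$) taking $H_2=\langle A,y\rangle$ for any lift $y\in G$ of an element of infinite order in $G/A$, which exists because $G/A$ is infinite polycyclic. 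Additivity of the Hirsch number forces $h(H_2)=2$, so the base case supplies a strongly aperiodic SFT on the polycyclic group $H_2$.

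The remaining, and main, case is $h(G/A)=1$. Here $h(A)=n-1\geq 2$, so $A\cong\mathbb{Z}^{n-1}$, and $G/A$ is polycyclic of Hirsch number $1$, hence virtually $\mathbb{Z}$. Pulling back an infinite cyclic subgroup of finite index in $G/A$ yields a finite-index subgroup $N\leq G$ fitting in
\[
1\longrightarrow\mathbb{Z}^{n-1}\longrightarrow N\longrightarrow\mathbb{Z}\longrightarrow 1
\]
with $n-1\geq 2$. The Bartholdi--Sabbagh theorem then supplies a strongly aperiodic SFT on $N$, which transfers to $G$ by commensurability, closing the induction. This case is the main obstacle: the elementary machinery of Section~2 produces no reduction when the normal abelian subgroup $A$ sits in an extension of a virtually $\mathbb{Z}$ quotient, and one is compelled to appeal to the deep Bartholdi--Sabbagh construction. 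Modulo this input, the argument is a direct transcription of the nilpotent proof, with each use of the centre replaced by a nontrivial normal free abelian subgroup.
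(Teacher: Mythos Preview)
Your argument is correct and follows essentially the same route as the paper: reduce to torsion-free polycyclic, pick a nontrivial normal free abelian subgroup, and case-split (you on $h(G/A)$, the paper equivalently on $k=h(A)$), invoking Proposition~\ref{main:prop} in the generic cases and the external theorem for the $\mathbb{Z}^{n-1}$-by-$\mathbb{Z}$ case. One correction: the theorem you cite is due to Barbieri and Sablik (reference \cite{BarSab} in the paper), not ``Bartholdi--Sabbagh''.
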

\begin{proof}

We start with Hirsch number 2. In this case, the polycyclic group $G$ is virtually $\mathbb{Z}^2$, hence has a strongly aperiodic SFT.

For the induction, consider $G$ a polycyclic group of Hirsch number $n> 2$.
$G$ contains a torsion-free polycyclic group of
finite index so we may suppose as before that $G$ is torsion free.

$G$ contains an nontrivial normal free abelian subgroup $H_1 = \mathbb{Z}^k$.

There are four cases:
\begin{itemize}
 \item $k = n$. In this case, $h(G/H_1) = h(G) - h(H_1) = 0$, therefore $G$
  is virtually $\mathbb{Z}^n$ and admits a strongly aperiodic SFT.
\item $k = 1$. We use the induction hypothesis on $G/H_1$ of Hirsch number
  at least $n-1$. By taking any element of infinite order in $G/H_1$, we
  obtain, as in the nilpotent case, a group $H_2 \supset H_1$ of Hirsch
  number $2$ and we apply Proposition~\ref{main:prop}.
\item $1 < k < n-1$. We use the induction hypothesis on both $G/H_1$ and
  $H_2 = H_1$ and conclude by Proposition~\ref{main:prop}.
\item $k = n-1$.
  In this case $G/H_1$ is of Hirsch number $1$, hence virtually
  $\mathbb{Z}$.
  
  By taking a finite index subgroup $G_1$ of $G$ we can suppose wlog that
  $G/H_1$ is exactly $\mathbb{Z}$. We can then apply the theorem of \cite{BarSab}.
\end{itemize}  
\end{proof}	
Note that the full extent of the theorem in \cite{BarSab} covers all cases
except the case $k \leq 1$. The reason we stated it only for the case $k =
n-1$ is that we believe there is an easier proof in this case.

\section{The domino problem}
We now prove that the domino problem is undecidable for all virtually
polycyclic groups which are not virtually cyclic (i.e. which are of
Hirsch number greater than $2$).
\begin{proposition}
	Let $G$ be a virtually polyclic group of Hirsch number greater than $2$.
	Then $G$ admits a f.g. subgroup $H$ that factors onto $\mathbb{Z}^2$.		
\end{proposition}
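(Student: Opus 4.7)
The plan is to reduce first to the case where $G$ is itself torsion-free polycyclic, by passing to a torsion-free polycyclic subgroup of finite index (which exists by a structural fact recalled in the previous section); any f.g.\ subgroup of this finite-index subgroup is in particular a f.g.\ subgroup of $G$.

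Assuming now that $G$ is torsion-free polycyclic of Hirsch number $n \geq 2$, I would invoke the structural result (also recalled above) that $G$ contains a nontrivial normal free abelian subgroup $N \cong \mathbb{Z}^k$. If $k \geq 2$, I can simply take $H = N$, which is f.g.\ and surjects onto $\mathbb{Z}^2$ via projection onto any two coordinates.

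The substantive case is $k = 1$, so $N \cong \mathbb{Z}$. Here $h(G/N) = n - 1 \geq 1$, and since a polycyclic group of positive Hirsch number is infinite while a polycyclic group in which every element has finite order is finite, $G/N$ must contain an element of infinite order. I would pick a lift $g \in G$ of such an element, which forces $\langle g\rangle \cap N = \{e\}$, and set $H = \langle g, N\rangle$. Since $N$ is normal in $G$ (hence in $H$) and $\langle g\rangle \cap N = \{e\}$, $H$ is a semidirect product $N \rtimes \langle g\rangle \cong \mathbb{Z} \rtimes \mathbb{Z}$, and the action of $g$ on $N \cong \mathbb{Z}$ is an automorphism of $\mathbb{Z}$, i.e.\ multiplication by $\pm 1$. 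If the action is trivial, $H \cong \mathbb{Z}^2$ and we are done; if it is $-1$, $H$ is the Klein bottle group, and I would replace $H$ by the index-two subgroup $\langle N, g^2 \rangle \cong \mathbb{Z}^2$ on which $g^2$ acts trivially. The only real subtlety is that in the $k = 1$ case one must work with a \emph{subgroup} of $G$ rather than $G$ itself, since the abelianization of $G$ need not have rank $\geq 2$ (e.g.\ $\mathbb{Z}^2 \rtimes_A \mathbb{Z}$ with $A - I$ invertible has abelianization of rank $1$, yet still admits the $\mathbb{Z}^2$ subgroup as required).
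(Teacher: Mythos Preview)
Your argument is correct, but it is organized differently from the paper's proof. The paper proceeds by induction on the Hirsch number: after reducing to polycyclic $G$, it takes a nontrivial normal free abelian $N$; if $\operatorname{rank} N \geq 2$ it is done, and if $\operatorname{rank} N = 1$ it applies the \emph{inductive hypothesis} to $G/N$ (which has Hirsch number $\geq 2$) to obtain $H' \leq G/N$ surjecting onto $\mathbb{Z}^2$, then takes the preimage of $H'$ in $G$.

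Your approach is non-inductive and in fact proves a bit more: you exhibit an honest copy of $\mathbb{Z}^2$ inside $G$, not merely a subgroup that factors onto $\mathbb{Z}^2$. The key extra ingredient is the observation that $\operatorname{Aut}(\mathbb{Z}) = \{\pm 1\}$, so that the group $\langle N, g\rangle \cong \mathbb{Z} \rtimes \mathbb{Z}$ you build in the $k=1$ case is virtually $\mathbb{Z}^2$. This is exactly the ``$H_2$ of Hirsch length $2$'' step from the paper's nilpotent theorem, pushed one step further. The trade-off: your argument is more elementary (no induction) and yields a sharper conclusion, while the paper's inductive scheme is more uniform and does not rely on the specific structure of $\operatorname{Aut}(\mathbb{Z})$.
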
	
\begin{proof}
	By induction on the Hirsch number, it is sufficient to prove the
	result for polycyclic groups.	
	Every polycyclic group of Hirsch number $2$ is virtually
	$\mathbb{Z}^2$, there is nothing to prove.
	
	Let $G$ be a polycyclic group of Hirsch number at least 3 and let $N$ be a
	nontrivial normal free abelian subgroup of $G$.
	If $N$ is of rank greater than $2$, $G$ contains a subgroup
	isomorphic to $\mathbb{Z}^2$, there is nothing to prove.
	
	Otherwise, $G/N$ is a polycyclic group of Hirsch number at least
	2, and admits by induction a subgroup $H$ that factors onto $\mathbb{Z}^2$.
	Therefore $G$ admits a subgroup (the preimage of $H$) that factors
	onto $\mathbb{Z}^2$.
\end{proof}	

\clearpage
\begin{cor}
	Every virtually polycyclic group which is not virtually cyclic has
	an undecidable domino problem.
\end{cor}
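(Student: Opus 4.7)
The plan is to lift the undecidability of the domino problem on $\mathbb{Z}^2$ (\cite{BergerPhD}) up to $G$ using the two transfer propositions from Section~2.1. The preceding proposition provides a finitely generated subgroup $H \leq G$ together with a surjective morphism $\phi \colon H \to \mathbb{Z}^2$. I would push undecidability from $\mathbb{Z}^2$ to $H$ along $\phi$ using Proposition~\ref{prop:l1}, and then from $H$ to $G$ along the inclusion using Proposition~\ref{prop:l2}.

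For the first step, one must check that $\ker \phi$ is finitely generated, which is the only nontrivial hypothesis of Proposition~\ref{prop:l1}. But $H$ is a subgroup of a virtually polycyclic group, hence is itself virtually polycyclic, and so is the normal subgroup $\ker \phi$; since virtually polycyclic groups are finitely generated, this causes no issue. Applying Proposition~\ref{prop:l1} with $H$ in place of $G$ and $\ker \phi$ in place of the normal subgroup, any finite list of forbidden patterns defining an SFT $X$ on $\mathbb{Z}^2$ is converted into a finite list of forbidden patterns defining an SFT $Y$ on $H$ with $Y \neq \emptyset \iff X \neq \emptyset$. This description is produced algorithmically from that of $X$, so we obtain a computable reduction and the domino problem on $H$ is undecidable.

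For the second step, Proposition~\ref{prop:l2} states that the same finite list of forbidden patterns which defines an SFT on $H$, reinterpreted as a list of forbidden patterns on $G$, defines an SFT whose emptiness is equivalent to that of the SFT on $H$. This again yields a computable reduction, this time from the domino problem on $H$ to the domino problem on $G$, so undecidability propagates to $G$, which is the statement of the corollary. The main obstacle, modest as it is, lies in the finite-generation check for $\ker \phi$ in the first step (required to invoke Proposition~\ref{prop:l1}); beyond that, the argument is just a chain of two straightforward reductions.
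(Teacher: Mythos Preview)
Your proof is correct and follows essentially the same route as the paper: obtain $H$ from the preceding proposition, note that $\ker\phi$ is finitely generated because subgroups of (virtually) polycyclic groups are (virtually) polycyclic, then chain the reductions of Proposition~\ref{prop:l1} and Proposition~\ref{prop:l2} to lift undecidability from $\mathbb{Z}^2$ through $H$ to $G$. If anything, you are slightly more careful than the paper in writing ``virtually polycyclic'' throughout, matching the hypothesis of the corollary.
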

\begin{proof}
  We use the previous proposition to obtain $H$.
	As $G$ is polycyclic, $H$ is polycyclic.
	Then $H/N = \mathbb{Z}^2$ for a normal subgroup $N$ of $H$, which
	is finitely generated as $G$ (therefore $H$) is polycyclic.
	
	Therefore $H$ has an undecidable domino problem: Given a SFT $X$
	on $H/N$, we may obtain (constructively) a SFT $Y$ on $H$ s.t. $X$
	is empty iff $Y$ is empty by Proposition \ref{prop:l1}.
	
	Therefore $G$ has an undecidable domino problem: Given a SFT $X$
	on $H$, we may obtain (constructively) a SFT $Y$ on $G$ s.t. $X$
	is empty iff $Y$ is empty  by Proposition \ref{prop:l2}.
\end{proof}

\section{Conclusion}

Proposition \ref{prop:l2} gives a way
to transform a SFT $X$ on $H$ to a SFT $Y$ on $G \supseteq H$. In many
cases, $Y$ will not be strongly aperiodic. However, surprisingly, it
is the case  in some situations. We already have used this fact in the
proof of our main theorem. Another example is that $\mathbb{Z}[1/2]
\times \mathbb{Z}[1/2]$ has a strongly aperiodic SFT $Y$ , which is
basically obtained from any strongly aperiodic SFT $X$ on $\mathbb{Z}^2$.
The idea is that if $g \in \mathbb{Z}[1/2]
\times \mathbb{Z}[1/2]$ is nontrivial, then some power of $g$ is
in
$\mathbb{Z} \times \mathbb{Z}$, therefore no nontrivial element of $g$
can be stabilizer of some point of $Y$.

Of course $ \mathbb{Z}[1/2] \times \mathbb{Z}[1/2]$ is an infinitely
generated group, and the study of SFT is primarily interesting in finitely
generated groups. However this observation might be useful for example
to produce a strongly aperiodic SFT in some solvable groups, e.g.
$BS(1,2) \times BS(1,2)$ ($BS(1,2)$ is basically a semidirect product
of $\mathbb{Z}[1/2]$ and $\mathbb{Z}$).

An obvious natural generalization of the main theorem would be to deal with
solvable groups rather than
polycyclic groups. The difficulty is that solvable groups that are not
polycyclic always have abelian  subgroups which are not finitely
generated. This adds additional cases that we do not know how to treat.
Of particular interest is the lamplighter group, or solvable
Baumslag-Solitar groups (for which only weakly aperiodic SFTs are known).
Furthermore, it is not true that every solvable group has
a strongly aperiodic SFT as there exist solvable finitely presented
group with   an undecidable word problem \cite{Baumslag}.

Another possible generalization which is  promising is Noetherian
groups, which are groups where every subgroup is finitely generated.
The wilder examples of Noetherian groups, Tarski monsters, do admit
strongly aperiodic SFT $X$ \cite{SubGroups} (where points in $X$ might
have finite, non trivial stabilizer), so it is quite possible that
all Noetherian groups (which are not virtually cyclic) do admit
strongly aperiodic SFT.

\appendix
\section{Automorphism-free SFT}

We finish  with a new definition of aperiodicity 
\begin{defn}
Let $G$ be a group. Let $Aut(G)$ be the group of automorphisms of $G$.
For $\phi \in Aut(G)$, let $\phi(x)$ be the point defined by $\phi(x)_g = x_{\phi(g)}$.

For a subshift $X$ and a point $x$, let
$Div(x,X) = \{ \phi \in Aut(G), \phi(x) \in X\}$

We say that a nonempty subshift  $X$ is automorphism-free if $Div(x,X)$ is
trivial for all $x \in X$.

\end{defn}

The interest of automorphism-free SFT is seen by the following remarks:
\begin{proposition}
Let $G = \mathbb{Z}^n$ with $n > 1$.
Then an automorphism-free SFT $X$ of $G$ is strongly aperiodic.
\end{proposition}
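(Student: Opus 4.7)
The plan is to prove the contrapositive: if $X$ is not strongly aperiodic, then there exists $x\in X$ and a nontrivial $\phi\in\mathrm{Aut}(\mathbb{Z}^n)=GL_n(\mathbb{Z})$ with $\phi(x)\in X$, contradicting automorphism-freeness. Suppose $x\in X$ has infinite stabilizer. Since subgroups of $\mathbb{Z}^n$ are free abelian, this is equivalent to $\mathrm{Stab}(x)$ being nontrivial, so there exists $v\in\mathrm{Stab}(x)$ with $v\neq 0$.

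The main idea is to build $\phi$ as a transvection whose displacement lies entirely in the lattice of periods of $x$. Concretely, because $n>1$ and $v\neq 0$, the kernel of the evaluation map $\mathrm{Hom}(\mathbb{Z}^n,\mathbb{Z})\to\mathbb{Z}$, $f\mapsto f(v)$, is free abelian of rank $n-1\geq 1$, so we can pick a nonzero linear functional $f:\mathbb{Z}^n\to\mathbb{Z}$ with $f(v)=0$. Define
\[
\phi:\mathbb{Z}^n\to\mathbb{Z}^n,\qquad \phi(w)=w+f(w)\,v.
\]
As a matrix, $\phi=I+vf^{T}$, so $\det\phi=1+f(v)=1$, showing $\phi\in SL_n(\mathbb{Z})$; and $\phi\neq\mathrm{id}$ because $f$ is nonzero.

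It remains to check $\phi(x)=x$, which then gives the desired contradiction since $x\in X$ but $\phi\neq\mathrm{id}$. By definition $\phi(x)_g=x_{\phi(g)}=x_{g+f(g)v}$. Since $v\in\mathrm{Stab}(x)$ and the stabilizer is a subgroup, the element $f(g)v$ is also a period of $x$, whence $x_{g+f(g)v}=x_g$. Thus $\phi(x)=x\in X$, contradicting the hypothesis that $\mathrm{Div}(x,X)$ is trivial. Therefore every $x\in X$ has trivial (hence finite) stabilizer, and $X$ is strongly aperiodic.

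The only place where the hypothesis $n>1$ enters is the existence of a nontrivial $f$ vanishing on $v$; for $n=1$ this space is zero, which matches the fact that $\mathbb{Z}$ admits no strongly aperiodic SFT at all. I do not expect a serious obstacle: the construction of the transvection is elementary, and the verification $\phi(x)=x$ reduces immediately to the periodicity of $x$ under $v$.
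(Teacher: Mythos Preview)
Your proof is correct and follows essentially the same approach as the paper: both construct a nontrivial transvection $\phi(w)=w+f(w)\,v$ (the paper writes it as $\phi(g)=g+(gv^T)u$, with its $u$ playing the role of your period $v$ and its inner product $g\mapsto gv^T$ playing the role of your functional $f$) and verify that a period $v$ of $x$ forces $\phi(x)=x$, contradicting automorphism-freeness. The only cosmetic differences are that you invoke the matrix determinant lemma to certify $\phi\in SL_n(\mathbb{Z})$ while the paper exhibits the inverse directly, and you frame the argument as a contrapositive rather than showing an arbitrary period must vanish.
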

\begin{proof}
	In this proof, we will see $\mathbb{Z}^n$ as a $\mathbb{Z}$-module
	and will write the law group additively and not multiplicatively.

Let $x \in X$.
Let $u \in \mathbb{Z}^n$. 
Let $v$ be any nonzero element of $\mathbb{Z}^n$ that is orthogonal to $u$, that is $uv^T = 0$.

Define $\phi(g) = g + (gv^T) u $.
Then $\phi$ is an automorphism of $\mathbb{Z}^n$, the inverse being given
by $\psi(g) = g - (gv^T) u$. (In a basis with
some basis vectors collinear to $u$ and $v$, $\phi$ would be a identity matrix with
one other nonzero coefficient).

Suppose that $u$ is a vector of periodicity for $x$, i.e.  $u\cdot x =
x$, therefore $x_{g - u} = x_g$ for all $g \in \mathbb{Z}^n$

However, for all $g \in \mathbb{Z}^n$, $g v^T$ is an integer, therefore
$\phi(g) = g + k u$ for some $k$ depending on $g$.
Therefore $x_{\phi(g)} = x_g$.

In particular $\phi(x) = x$. Therefore $\phi(x) \in X$.
Therefore $\phi$ is trivial, that is $u = 0$.
\end{proof}

And the fact that automorphism-free SFT do exist:
\begin{proposition}
	Let $n > 1$. Then $\mathbb{Z}^n$ admits a automorphism-free SFT.
\end{proposition}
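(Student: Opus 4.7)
The plan is to augment a strongly aperiodic SFT with just enough local directional information that the standard basis of $\mathbb{Z}^n$ becomes detectable from any valid configuration, so that no non-trivial $\phi \in GL_n(\mathbb{Z})$ can send a point of the resulting SFT $X$ back into $X$. For the base case $n=2$, I would take Berger's strongly aperiodic Wang tile set, viewed as an SFT $Y$ on $\mathbb{Z}^2$, and enrich it so that (i) the east--west matching colors and north--south matching colors come from disjoint palettes, and (ii) each tile carries asymmetric markings that distinguish $+e_1$ from $-e_1$ and $+e_2$ from $-e_2$ (for instance through a hierarchical ``compass'' layer chosen to have no nontrivial internal symmetry).

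For any $\phi \in GL_2(\mathbb{Z})$ and any $x \in X$, the identity $(\phi(x))_g = x_{\phi(g)}$ translates the matching rules of $X$ applied to $\phi(x)$ into matching rules on $x$ itself along the vectors $\phi(e_1)$ and $\phi(e_2)$ rather than $e_1, e_2$. Superimposing these ``$\phi$-shifted'' rules on the original ones yields extra constraints on $x$ (for instance, the west color at $h+e_1$ must equal the west color at $h+\phi(e_1)$ for every $h$), and I would argue that the rigidity of Berger's hierarchical structure makes these constraints incompatible with strong aperiodicity unless $\phi(e_1) \in \{\pm e_1\}$ and $\phi(e_2) \in \{\pm e_2\}$; the asymmetric compass then eliminates the residual sign ambiguities and leaves $\phi = \mathrm{id}$.

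The step I expect to cost the most work is ruling out non-orthogonal $\phi \in GL_2(\mathbb{Z})$, such as shears sending $e_2$ to $e_1+e_2$: here $\phi(e_i)$ is no longer axial, so the disjoint-palettes argument does not apply directly, and one has to use the substitutive rigidity of the underlying aperiodic tiling to deduce that matching along a non-axial direction would force the tile layer of $x$ to acquire a translational period, contradicting strong aperiodicity. For general $n>1$, the plan is to iterate the same construction inside each coordinate $2$-plane via Proposition~\ref{prop:l2}, together with a global asymmetric marker distinguishing the $n$ basis vectors; the same plane-by-plane argument then forces any $\phi \in GL_n(\mathbb{Z})$ with $\phi(x) \in X$ to fix every $e_i$, hence to be the identity.
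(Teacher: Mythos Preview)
Your proposal has a genuine gap at exactly the point you yourself flag: handling shears (and more generally non-monomial elements of $GL_n(\mathbb{Z})$). You assert that ``the substitutive rigidity of the underlying aperiodic tiling'' would force a translational period on $x$ once the matching rules also hold along $\phi(e_1)$ and $\phi(e_2)$, but you give no mechanism for this, and it is not clear that it is true. Concretely, for a shear $\phi(e_1)=e_1$, $\phi(e_2)=e_1+e_2$, the extra constraint on $x$ is only that the north colour of $x_h$ equals the south colour of $x_{h+e_1+e_2}$ for every $h$; combined with the usual vertical matching this says that the south-colour function is $e_1$-periodic. For a generic Wang tileset (Berger's included) there is no reason this should propagate to full periodicity of $x$: many distinct tiles share the same south colour, and the hierarchical structure does not obviously obstruct such a coincidence. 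Your ``disjoint palettes'' and ``compass'' layers are likewise only sketched; in particular you have not explained how a compass layer with no nontrivial internal symmetry can itself be realised as a nonempty SFT.

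The paper sidesteps this difficulty by choosing a tileset with a purpose-built \emph{quantitative} feature rather than relying on generic substitutive rigidity. It uses Kari's tileset, which comes with a projection $\pi$ to $\{0,1\}$ such that every column of $\pi(x)$ is constant while every row is a Sturmian word of irrational slope; hence along $e_2$ the sequence $\pi$ is monochromatic, while along any direction with nonzero $e_1$-component it is not. This single dichotomy immediately forces $\phi(e_j)$ to have zero $e_1$-component for $j\neq 1$, i.e.\ $\phi$ preserves the hyperplane $V_1$. Taking a product of $n$ such SFTs (one per coordinate) pins $\phi$ down to $\phi(e_i)=\pm e_i$, and the sign ambiguity is then killed not by an ad hoc compass layer but by the trivial finite-orbit SFT recording each coordinate modulo $3$. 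If you want to rescue your approach, you would need to isolate an analogous concrete invariant of Berger's tilings that distinguishes axial from non-axial directions; absent that, the argument as written does not go through.
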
	
\begin{proof}
Let $\{e_1, e_2\dots e_n\}$ be the canonical base of $\mathbb{Z}^n$.
Let $V_i$ be the module generated by all vectors $\{ e_j, j \not= i\}$.

We will first build, for all $i$, a SFT $X_i$ s.t. if $x \in X_i$ and
$\phi$ is an automorphism of $\mathbb{Z}^n$ s.t. $\phi(x) \in X_i$, then $\phi(V_i) = V_i$.
It is sufficient to do it for $i = 1$.

To do this, we will start from a SFT $Y$ on $A^{\mathbb{Z}^2}$ built by Kari \cite{Kari14}.
On this SFT, a mapping $\pi: A \mapsto \{0,1\}$ can be defined s.t.
for any configuration $x \in Y$, 
every column of $\pi(x)$ (i.e. in direction $e_2$) is monochromatic,
and every row contains a sturmian word of irrational slope.
This implies in particular that every line of the form $(k p e_1)_{k \in
  \mathbb{Z}}$, with $p \neq 0$,  cannot be monochromatic.

We extend this SFT $Y$ to a SFT $X_1$ on $\mathbb{Z}^n$ by imposing every
configuration to be periodic in direction $e_3 \dots e_n$.
Doing this, we have built a (nonempty) SFT with the following property: for
every point $x \in X_1$:
\begin{itemize}
	\item $\pi(x_{p_1 e_1 + p_2 e_2 + \dots p_n e_n}) = \pi(x_{p_1 e_1})$
	\item $(\pi(x_{p_1 k e_1}))_{k \in\mathbb{Z}}$ is not
	  monochromatic unless $p_1 = 0$.
\end{itemize}

Now, let $x \in X_1$ and $\phi$ an automorphism of $\mathbb{Z}^n$ s.t. $y = \phi(x) \in X_1$.
Let $\phi(e_2) = p_1 e_1 + p_2 e_2 + \dots p_n e_n$.
As $y \in X_1$, $(\pi(y_{ke_2}))_{k \in \mathbb{Z}}$ must be monochromatic.
But $\pi(y_{ke_2}) = \pi(\phi(x)_{ke_2}) = \pi(x_{\phi(ke_2)}) =
\pi(x_{kp_1e_1})$. Therefore $p_1 = 0$.
Doing the same with $e_3 \dots e_n$, we have proven that if $\phi(x)
\in X_1$, then $\phi(V_1) = V_1$.

We build in the same way SFTs $X_2, X_3 \dots X_n$ with similar
properties and we take $X = X_1 \times X_2 \dots \times X_n$.

By the previous discussion, if $x \in X$ and $\phi(x) \in X$ then
$\phi(V_i) = V_i$ for all $i$.
This implies in particular $\phi(e_i) = \pm e_i$.

To finish, let $Z$ be the SFT over the alphabet $\{0,1,2\}^n$ that
consists in the point $z$ defined by $z_{p_1e_1 + \dots p_n e_n} =
(p_1 \mod 3, p_2 \mod 3, \dots p_n \mod 3)$ and its $3^n - 1$ other translates.

It is easy to see that if $z \in Z$ and $\phi(z) \in Z$ then it is not
possible to have $\phi(e_i) = -e_i$ for some $i$.

Therefore $X \times Z$ is automorphism-free.
\end{proof}

\end{document}